\documentclass[10pt, conference, letterpaper]{IEEEtran}

\usepackage[utf8]{inputenc}

\usepackage{amsthm}
\usepackage{amsmath}
\usepackage{amsfonts}
\usepackage{nicefrac}

\newtheorem{definition}{Definition}
\newtheorem{theorem}{Theorem}

\usepackage[ruled, vlined, linesnumbered]{algorithm2e}

\usepackage{graphicx}
\usepackage{tikz}
\usepackage[caption=false]{subfig}

\usepackage{multirow}

\usepackage{url} 

\newcommand{\capacity}{b}
\newcommand{\fee}{fee}
\newcommand{\route}{\mathcal{R}}
\newcommand{\numPathTotal}{d}
\newcommand{\numPathNeeded}{k}


\title{The Merchant: Avoiding Payment Channel Depletion through Incentives}
\author{
\IEEEauthorblockN{Yuup van Engelshoven,  Stefanie Roos}
\IEEEauthorblockA{Delft University of Technology}
\IEEEauthorblockA{yuupvengelshoven@gmail.com, s.roos@tudelft.nl}
}

\begin{document}

\maketitle

\begin{abstract}
Payment channels networks drastically increase the throughput and hence scalability of blockchains by performing transactions \emph{off-chain}. In an off-chain payment, parties deposit coins in a channel and then perform transactions without invoking the global consensus mechanism of the blockchain.  However, the transaction value is limited by the capacity of the channel, i.e., the amount of funds available on a channel. These funds decrease when a transaction is sent and increase when a transaction is received on the channel. Recent research indicates that there is an imbalance between sending and receiving transactions, which leads to channel depletion in the sense that one of these operations becomes impossible over time due to the lack of available funds. 

We incentivize the balanced use of payment channels through fees. Whereas the current fee model depends solely on the transaction value, our fee policies encourage transactions that have a positive effect on the balance in a channel and discourage those that have a negative effect. This paper first defines necessary properties of fee strategies. Then, it introduces two novel fees strategies that provably satisfy all necessary properties. 
Our extensive simulation study reveals that these incentives increase the effectiveness of payments by $8\%$ to $19\%$.  
\end{abstract}

\section{Introduction}

Blockchains offer novel mechanisms for realizing secure non-custodial transactions~\cite{bitcoin}. However, due to the need for disseminating all transactions and blocks to all participants, basic blockchain designs do not scale in terms of throughput and latency~\cite{gervais2016security}. A promising solution to the lack of scalability in blockchains are off-chain transactions,  which allow for the execution of transactions without directly publishing them on the blockchain~\cite{gudgeon2019sok}.

The most mature approach to off-chain transactions are \emph{payment channel networks}~\cite{spilmanbitcoinj}. 
When creating a payment channel,  two nodes lock collateral on the blockchain that can then be spent during off-chain transactions between these two nodes. Transactions reduce the available funds of the sending node by the transaction amount while they increase the funds available to the recipient by the same amount. 
When two parties do not share a payment channel, they can forward the payment via multiple intermediaries. These intermediaries form one or several \emph{payment paths} along which the funds are reduced in one direction for all channels and increased in the other direction~\cite{lightning}. 

Recent work indicates that current routing protocols for finding payment paths predominantly use these bidirectional channels in one direction. As a consequence, channels become depleted in one direction, i.e., the available funds are low or 
completely gone. Channels without funds are no longer useful for completing payments~\cite{di2018routing,sivaraman2018routing}. Costly on-chain transactions, referred to as on-chain rebalancing~\cite{sivaraman2018routing}, that modify the deposits placed in the channel are required  to maintain liquidity. 

In this work, our goal is to incentivize the balanced use of channels to reduce depletion. 
Similar to previous approaches~\cite{di2018routing,sivaraman2018routing}, we focus on introducing fee functions for intermediaries in payments that incentivize maintaining balanced channels by promoting low fees for payment paths that avoid or reduce depletion. 

We are the first to motivate and formally define desirable properties of balance-incentive fee functions.  Based on these properties, we design two variants of a fee function that nodes apply locally to compute their fees. 
The key aspect of these fee functions is that the fee for a payment increases linearly with the degree to which it deteriorates the balance of a channel. The two variants differ in their treatment of payments that improve balances: The first variant still charges a small fee for such payments wheres the second variant does not charge fees. 

When conducting a payment, the sender then discovers several potential payment paths and chooses the path or paths for which the fees are lowest. As a consequence, the difference in communication overhead between the different fee functions is minimal as they all use the same routing algorithm and later locally decide which paths to utilize for payment.

We conducted a simulation-based evaluation for a variety of scenarios. Our fee function reduced the effect of depletion on the ratio of successfully conducted transactions by often more than $10\%$.
While channels still deplete over time, our work can be combined with on-chain rebalancing, which is required less frequently when combined with our fee function, and rebalancing loops~\cite{loopout}. 

Our routing scheme is fundamentally different to the source routing protocol currently implemented in Lightning~\cite{lightning}, Bitcoin's payment channel network. 
We point out two deployment options for integrating our work in the current Lightning algorithm. The first deployment option only requires minimal changes to Lightning's current protocol but increases the overhead as it requires three traversals of payment paths. The second option requires more significant changes but only requires two traversals as in the current Lightning implementation.

\section{System Model}
\label{sec:model}

In this section, we introduce our model of payment channel networks. While largely inspired by Bitcoin's Lightning network~\cite{lightning}, the model is more general in terms of fee functions and routing algorithms. 

\subsection{Payment Channel Network}

At time $t$, a payment channel network can be represented by a graph $G_t=(V_t,E_t)$ with nodes $V_t$ and channels $E_t \subset V_t \times V_t$, a non-negative balance function $\capacity_t: E_t \rightarrow  \mathbb{R}$, and a fee function $\fee_t: E_t \times \mathbb{R} \rightarrow \mathbb{R}$. 
 In other words, the balance function assigns each channel its available funds at time $t$ whereas the fee function gives the fee of the channel for a transaction value. 
 We assume that channels are always established in two directions, i.e., if $(u,v) \in E_t$, then $(v,u) \in E_t$. However, balances and fees usually differ between the two directions.
 
Two users $u,v \in V$ establish a channel with initial balances $c_{u,v}$ and $c_{v,u}$. In other words, $u$ can initially send up to $c_{u,v}$ coins to $v$ and $v$ can send up to $c_{v,u}$ coins to $u$. 
The \emph{total capacity} of the channel is $c_{u,v} + c_{v,u}$ where $c_{u,v}$ is the \emph{channel balance} in the direction from $u$ to $v$. 
Channel establishment requires a blockchain transaction. 

If $u$ transfers a positive amount $x \leq c_{u,v}$ to $v$, the balance from $u$ to $v$ changes to $c_{u,v}-x$ and the balance from $v$ and $u$ to $c_{v,u}+x$. Hence, the total capacity of the channel remains constant. Payment channel networks use cryptographic protocols as well as incentives to ensure that users cannot claim incorrect channel capacities without being detected or losing money~\cite{gudgeon2019sok}. 

At some point, $u$ and $v$ close the channel using another blockchain transaction. They receive the amount of funds corresponding to the balance of their channel direction, e.g., $u$ receives the balance in the direction from $u$ to $v$ at the time of the closure. 

As channel establishment is expensive and time-consuming, users do not open a channel unless they frequently transfer money between each other. For infrequent transactions to other nodes, they conduct a \emph{multi-hop payment}. 
In such a payment, the payment is forwarded via intermediaries from a source to a destination. In Lightning, the complete payment takes one single path~\cite{lightning} while other algorithms split the payment over multiple paths~\cite{roos2017settling}. 

In a multi-hop payment, the intermediaries charge fees according to the fee function $\fee$. Our model assumes that $v$ charges the fee for a channel $(v,u)$ when it is used from $v$ to $u$.  The assumption is reasonable when considering the impact of a payment on the balances: A payment from $v$ to $u$ decreases $v$'s balance and hence ability to send funds to $u$ but increases $u$'s balance. Thus, the transfer for this channel affects $v$'s liquidity negatively and hence incentives should target $v$ primarily. 
As a consequence, the sender $s$ does not pay a fee for the first channel in each path as $s$ itself is the first endpoint of that channel. 

Furthermore, this fee model is in line with Lightning's current fees. While Di Stasi et al.\ suggest to split the fee between both nodes adjacent to a channel, they do not provide any motivation  of this approach in terms of incentives~\cite{di2018routing}.

\subsection{Routing Algorithm}
\label{sec:routing}

A routing algorithm in payment channel networks discovers one or multiple paths to conduct a payment. In case of multiple paths, the algorithm furthermore determines how to split the payment over these paths. 
Informally, a routing algorithm has to find paths that do not violate the capacity restrictions of the payment network and can carry the complete payment value. 

As the lifetime of channels is in the order of hundreds of days\footnote{\url{https://1ml.com/statistics}}, the probability that the channel is closed during a transaction is ignored in our model. In other words, $V_t$ and $E_t$ are constant during a transaction but the balances and fees might change. 

Formally, let $\route$ be an algorithm that takes a payment network $G_t$, a source node $s$, a destination $d$, and a payment value $tx$ as input. Given that input, $\route$ computes a set $P$ of path-amount pairs, each of which consists of one path between $s$ and $d$ together with a transaction amount, and the total fee $f$ for conducting the payment. 

For the payment to be successful, it is required that
\begin{align}
\label{eq:txval}
\sum_{(p_i,val_i) \in P} val_i \geq tx+f.
\end{align} 
Most designs for multi-path routing~\cite{malavolta2017silentwhispers,roos2017settling} consider equality in Eq.~\ref{eq:txval}. Yet, a recent cryptographic protocol enables the source to send more than the total payment value and later reclaim any funds exceeding the transaction value. By sending more than the total transaction amount, the protocol accounts for the possibility of a partial payment failing~\cite{bagaria2020boomerang}. 

Furthermore, the payment is only successful if the capacity restrictions of the channels are respected. For simplicity, we assume that there are no concurrent payments. In the presence of concurrent payments using the same channels, previous work provides an approach for serializing such payments to prevent deadlocks~\cite{werman2018avoiding}. 

Note that the transaction value $val_i$ for the path $p_i$ consists of an amount $tx_i$ that ends up at the receiver and a fee $f_i$ for the intermediary nodes. Let $len(p_i)$ denote the length of the path, $e^{ij}$ the $j$-th channel on the path considered at time $t_{ij}$, and $f_{ij}$ the fee charged for using $e_{ij}$. The first channel on the path has to carry the full payment $val_i$. Each intermediary then keeps an amount corresponding to its fee, which is computed based on the value forwarded along the next channel, i.e., the actual transaction value plus the fees of its successors on the path. As a consequence, the fees are computed iteratively starting from the last channel $e_{il}$ with $l=len(p_i)$, so
\begin{align} 
\label{eq:feeIndividual}
f_{il} &= \fee_{t_{il}}\left(e_{il}, tx_i\right) \\
f_{ij} &= \fee_{t_{ij}}\left(e_{ij}, tx_i + \sum_{k=j+1}^{l} f_{ik}\right) \textnormal{ for } j=2, \ldots, l-1 \nonumber
\end{align}
Note that the index $j$ in Eq.~\ref{eq:feeIndividual} starts at 2 due to the first channel requiring no fees, as discussed above. The fee $f$ for the complete payment is then the sum of all partial fees, i.e.,
\begin{align}
\label{eq:feeTotal}
f = \sum_{(p_i, val_i) \in P} \sum_{j=2}^{len(p_i)} f_{ij}. 
\end{align}

Now, let $\tau$ denote the time at the start of the transaction. For a channel $e$, $cap_{\tau}(e)$ is the balance at the start of the transaction. Then, $\route$ has to guarantee that the sum of amounts forwarded via $e$ does not exceed the balance. Let $pos(p_i, e)$ denote the index corresponding to $e$'s position in the path $p_i$ if $e$ is contained in the path. Otherwise,  $pos(p_i, e)=0$. Formally, $\route$ has to adhere to the following constraints for all $e \in E_{\tau}$: 
\begin{align}
\label{eq:capConstraints}
\sum_{(p_i, val_i) \in P} \mathbf{1}_{(0,\infty)}& (pos(p_i, e)) \left(tx_i + \sum_{j=pos(p_i, e)+1} f_{ij} \right) \nonumber \\
&\leq cap_{\tau}(e) 
\end{align}
with $\mathbf{1}_{S}(x)$ being $1$ if $x$ is in the set $S$ and $0$ otherwise. 

If $\route$ fails to find a solution that satisfies Eqs.~\ref{eq:txval} and~\ref{eq:capConstraints}, $\route$ responses with a routing failure.  Otherwise, the process can advance to the payment phase. 

\subsection{Security of Multi-hop Payments}
When conducting a payment along the paths discovered by $\route$, sender $s$ and destination $d$ require no counter party risk and atomicity~\cite{gudgeon2019sok}. 
No counterparty risk means that none of the involved parties can violate their obligations to conduct a transfer as promised without repercussions. 
Atomicity means that all transfers along individual channels have to succeed for the payment to succeed. Otherwise, if only one of the transfers fails, $s$ can reclaim the total amount sent, including the fees. 

Payment channel networks achieve no counterparty risk and atomicity through so-called locks. Locking a certain payment amount in a channel corresponds to a commitment to utilize the amount for this payment. Only after all nodes involved in the payment have signed the necessary commitments, the actual payment is conducted. If the payment does not happen, the lock expires after a time-out and the node can use the funds for other payments. 
If a party defaults their obligations after locking the funds, the party loses money as parties typically have to pay before they are paid.  
There exist a variety of locking mechanisms~\cite{htlcs,tairi2019a2l,sprites,malavoltamulti}, differing with regard to performance, privacy, and additional security goals. We discuss their suitability for our routing protocol in Section~\ref{sec:deployment}.

\section{Fee Function Properties}
\label{sec:fee}

Before introducing our fee function, this section narrows down the type of fee functions to functions that incentivize balanced channels. 

Generally, let the \emph{reference point} $ref(e)$ of a channel $e$ between $u$ and $v$ be the capacity that $u$ and $v$ consider optimal. In previous studies~\cite{di2018routing}, the reference point corresponds to splitting the total capacity  equally between the two directions, i.e., $ref(e)=\frac{\capacity_t((u,v))+\capacity_t((v,u))}{2}$. However, in practice, $u$ and $v$ might preferably use the channel in one direction and hence consider an unequal distribution of the total capacity  more beneficial.
By allowing nodes to define their own reference point, this model provides the flexibility necessary to incorporate heterogeneous financial relationships.  

In Section~\ref{sec:model}, the fee function $\fee_t$ takes the channel and the transaction value as input. 
As our fee function should incentivize balanced channels, the relevant aspects of the channel are the current balance of the channel in each direction and its reference point. 
Thus, we have a fee function with four inputs. 

While all inputs are non-negative real numbers, there are some restrictions on the choice of the values for payments to be feasible. For the channel to be useful and worth establishing, the total capacity needs to exceed 0. For the payment to succeed, the transaction value can be at most equal to the balance in the respective direction. Similarly, the reference point needs to be an achievable balance, i.e., between 0 and the total capacity. Thus, our input set is 
\begin{align}
\label{eq:input}
&I = \\
 \{(c_{-}, c_{+}, r, x) \in \mathbb{R}_{\geq 0} &: c_{-}+c_{+} > 0, x \leq c_{-}, r \leq c_{-}+c_{+}\} \nonumber
\end{align}
with $\mathbb{R}_{\geq 0}$ denoting the set of non-negative real numbers. 
So, our designed fee function has the form $\fee_{bal} \colon I \rightarrow \mathbb{R}$ with the four inputs being the two balances, the reference point, and the transaction value. 

Our first and key property ensures that the fee function indeed incentivizes moving towards the reference point.
\begin{definition}
\label{def:balInv}
A fee function $\fee_{bal} \colon I \rightarrow \mathbb{R}$ is \emph{balance-incentive} if 
\begin{align}
\label{eq:balInv}
&\forall c_{-}, c_{+},r,x_1,x_2 \in \mathbb{R},x_1\leq c_{-}, x_2\leq c_- \colon \nonumber \\
& \fee_{bal}(c_{-}, c_{+},r, x_1) \leq \fee_{bal}(c_{-}, c_{+},r,x_2) \\
& \implies |c_{-}-x_1-r| \leq |c_{-}-x_2-r|. \nonumber 
\end{align}
\end{definition}
In other words, the fee function increases with $|c_{-}-x-r|$. We say that the balance of a channel deteriorates if it moves further away from the reference point. If it moves towards the reference point, the balance improves. 

Second, the fee function should be \emph{continuous} in all variables so that similar values result in similar fees. 

Third, splitting a payment into a large number of smaller payments and transferring all payments along the same channel creates a high overhead. Thus, the fee function should disincentivize such splits. In other words, the fee when transferring the complete value is at most as high as the sum of the individual fees when splitting the value.  Formally, this implies that for all $c_{-},c_{+}, r,x_1,x_2$ with $x_1 + x_2 \leq c_{-}$, we have 
\begin{align}
\begin{split}
 &\fee_{bal}(c_{-},c_{+},r,x_1+x_2) \leq \\
&\fee_{bal}(c_{-},c_{+},r, x_1) + \fee_{bal}(c_{-}-x_1,c_{+}+x_1, r, x_2).
\end{split}
\end{align}
In other words, the fee function is \emph{subadditive in the transaction value}.

The above three properties --- balance-incentive, continuous, and subadditive--- form the basis for our design of a fee function. 

\section{The Merchant}
\label{sec:merchant}

This section describes The Merchant, which consists of i) a method for integrating fees into a multi-hop routing algorithm $\route$ and  ii) a fee function. The Merchant can use one of two variants for the fee function. 

\subsection{Routing Algorithm Integration}
\label{sec:routeIntegration}

The Merchant provides a method for integrating a preference for cheap paths into routing that is independent of the fee function. 
Assume that the routing algorithm $\route$, as defined in Section~\ref{sec:routing}, can be parametrized such that it returns $\numPathTotal$ paths and splits the transaction value equally over all paths. 
The Merchant uses the cheapest $\numPathNeeded \leq \numPathTotal$ for the payment. 

More precisely, to conduct a transaction of value $val$, $\route$ looks for paths for a total transaction value of $val/\numPathNeeded \cdot \numPathTotal$.   If $\route$ finds $m \geq \numPathNeeded$ paths, routing is successful. The sender then selects the $\numPathNeeded$ paths with the lowest fees. 
We discuss how to realize such a path selection securely when exploring deployment options in Section~\ref{sec:deployment}. 

\subsection{Fee Function}
\label{sec:feefunc}

Our strategy is to have a fee that is proportional to the degree a transaction changes the balance with regard to the reference point. Thus, for a factor F, a straight-forward choice for the fee function is
\begin{align}
\label{eq:ourfee}
\fee_{bal}(c_{-},c_{+}, r, x) = \left(1 + \frac{|c_{-}-x-r|-|c_{-}-r|}{c_{-}+c_{+}}\right)\cdot F. 
\end{align}
If the transaction results in the balance getting closer to the reference point, the fraction in Eq.~\ref{eq:ourfee} is negative, thus leading to a low fee. Otherwise, if the transaction results in the balance being further from the reference point, the fraction is positive and hence increases the fee. 

\begin{theorem}
\label{thm:feebal}
The fee function $\fee_{bal}: I \rightarrow \mathbb{R}$, as defined in Eq.~\ref{eq:ourfee}, is balance-incentive, continuous, and subadditive. 
\end{theorem}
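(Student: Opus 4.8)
The plan is to verify the three properties one at a time, exploiting two structural features of Eq.~\ref{eq:ourfee} that do most of the work. First, once the first three arguments $c_-,c_+,r$ are fixed, the fee is an \emph{affine, strictly increasing} function of the single quantity $t(x):=|c_--x-r|$: writing $S:=c_-+c_+$ we have $\fee_{bal}(c_-,c_+,r,x)=\bigl(1-\tfrac{|c_--r|}{S}\bigr)F+\tfrac{F}{S}\,t(x)$, an expression of the form $A+B\,t(x)$ with slope $B=F/S>0$ (assuming the fee factor $F>0$ and recalling $S>0$ on $I$). Second, a partial payment $(c_-,c_+)\mapsto(c_--x_1,c_++x_1)$ leaves the total capacity $S$ unchanged, so every fee evaluation appearing in the subadditivity inequality shares the \emph{same} denominator $S$. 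These two observations reduce each property to a short calculation.

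For \emph{balance-incentive}, I would simply read off the monotonicity above: since $B>0$, the inequality $\fee_{bal}(c_-,c_+,r,x_1)\le\fee_{bal}(c_-,c_+,r,x_2)$ is equivalent to $t(x_1)\le t(x_2)$, i.e.\ to $|c_--x_1-r|\le|c_--x_2-r|$, which is exactly the conclusion demanded by Definition~\ref{def:balInv} (in fact an equivalence, not merely an implication). For \emph{continuity}, I would note that $x\mapsto|c_--x-r|$ and $(c_-,c_+,r)\mapsto|c_--r|$ are continuous, that the denominator $S=c_-+c_+$ is continuous and strictly positive everywhere on $I$ by the defining constraint $c_-+c_+>0$, and hence that the quotient and the whole expression are continuous on $I$.

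The substance is in \emph{subadditivity}. Fixing $c_-,c_+,r$ and $x_1,x_2$ with $x_1+x_2\le c_-$, I would first check that the shifted argument $(c_--x_1,c_++x_1,r,x_2)$ indeed lies in $I$: its total capacity is again $S>0$, its transaction value obeys $x_2\le c_--x_1$, and the constraint $r\le S$ is unchanged. I would then expand the right-hand side using the invariant denominator $S$. The second summand contributes the numerator $|c_--x_1-x_2-r|-|c_--x_1-r|$ and the first contributes $|c_--x_1-r|-|c_--r|$; the intermediate term $|c_--x_1-r|$ appears once with each sign and cancels. What survives is $\mathrm{RHS}-\mathrm{LHS}=F$ exactly, so the inequality holds (with slack $F\ge 0$) and subadditivity follows.

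I expect the only non-mechanical step to be recognizing that the telescoping cancellation is driven by the conservation of total capacity: because $S$ is the common denominator of all three terms, the balance-change numerators form a telescoping sum whose endpoints are precisely the numerator of the left-hand side, leaving behind just the base-fee surplus $F$ incurred by splitting one payment into two. Once this is seen, no case analysis on the signs of the absolute values is needed. The remaining care is bookkeeping: keeping the assumption $F>0$ explicit (it is what makes the fee strictly increasing in $t$, and hence balance-incentive rather than only weakly monotone) and confirming that the shifted arguments stay within $I$.
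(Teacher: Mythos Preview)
Your proposal is correct and follows essentially the same route as the paper: the paper likewise observes that $\fee_{bal}$ is increasing in $|c_--x-r|$, that continuity holds because $c_-+c_+>0$ on $I$, and that in the subadditivity check the term $|c_--x_1-r|$ cancels telescopically to leave an excess of exactly $F$. Your write-up is somewhat more careful (explicit verification that the shifted quadruple lies in $I$, and flagging the assumption $F>0$ needed for strict monotonicity), but the argument is the same.
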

\begin{proof}
By design, $\fee_{bal}$ is balance-incentive as it is increasing in $|c_{-}-x-r|$. 
As the input set $I$ excludes $c_-+c_+=0$, $\fee_{bal}$ is also continuous. 

For subadditivity, consider that
\begin{align*}
&fee_{bal}(c_{-},c_{+}, r, x_1) + \fee_{bal}(c_{-}-x_1,c_{+}+x_1, r, x_2) \\
=& \left(1 + \frac{|c_{-}-x_1-x_2-r|-|c_{-}-x_1-r|}{c_{-}+c_{+}} \right. \\
&\left. +1+\frac{|c_{-}-x_1-r|-|c_{-}-r|}{c_{-}+c_{+}}\right)\cdot F \\
=& (2 + \frac{|c_{-}-x_1-x_2-r|-|c_{-}-r|}{c_{-}+c_{+}})\cdot F \\
> & (1 + \frac{|c_{-}-x_1-x_2-r|-|c_{-}-r|}{c_{-}+c_{+}})\cdot F \\
=& \fee_{bal}(c_{-}, c_{+}, r, x_1+x_2). 
\end{align*} 
Hence, $\fee_{bal}$ is subsadditive. 
\end{proof}

\subsection{Variant: No Fees For Improved Balance}
The above fee function adds a fee for each channel on  a path, excluding the first one. However, it is debatable if changes to balances that improve the ability of a node to route future payments should entail a fee. 
Consider a path of length four of which the balance of two channels worsens and the balance of two channel improves. 
As fees are charged even for channels with improved balance, the sender still prefers a path of length two with two channels for which the balance deteriorates similarly to the two channels with deteriorated balances on the path of length four. 
While preferring shorter paths has the advantage of lower latencies and less locked collateral on the path, the decision not to choose the path that offers some improvement in terms of balances seems disadvantageous in terms of preventing depletion. 

The obvious option to counteract such issues is to allow for negative fees. However, it seems unlikely that nodes will agree to pay others to use their channels in exchange for potential but not certain future gains. 
In particular, nodes have to lock the collateral they promised for some time period, meaning it is not available in either direction for this period. 
Hence, we decided not to use negative fees. Rather, whenever the capacity changes in a positive manner, the node does not charge a fee.

Precisely, the variant of our fee function is 
\begin{align}
\label{eq:ourfee0}
\begin{split}
&\fee^0_{bal}(c_{-},c_{+}, r, x) = \\
& \begin{cases}
0, &\text{ if }  |c_{-}-x-r|\leq |c_{-}-r|\\ 
\frac{|c_{-}-x-r|-|c_{-}-r|}{c_{-}+c_{+}}\cdot F, & \text{ otherwise }
\end{cases}
\end{split}
\end{align}

\begin{theorem}
\label{thm:feebal0}
The fee function $\fee^0_{bal} \colon I \rightarrow \mathbb{R}$, as defined in Eq.~\ref{eq:ourfee0}, is balance-incentive, continuous, and subadditive. 
\end{theorem}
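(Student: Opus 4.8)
The plan is to mirror the three-part structure of the proof of Theorem~\ref{thm:feebal}, but since the clamping in Eq.~\ref{eq:ourfee0} changes the character of each argument, I would first collapse the two cases into a single closed form. Writing $\phi(y) = |c_{-} - y - r|$ for the post-transaction distance to the reference point and $y^{+} = \max(0,y)$ for the positive part, the piecewise definition becomes
\begin{align*}
\fee^0_{bal}(c_{-},c_{+},r,x) = \frac{F}{c_{-}+c_{+}}\bigl(\phi(x) - \phi(0)\bigr)^{+},
\end{align*}
because the first branch of Eq.~\ref{eq:ourfee0} is exactly the region $\phi(x) \le \phi(0)$ where the positive part vanishes. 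This form exposes the denominator $c_{-}+c_{+}$ and isolates the only nonsmooth ingredient, namely $(\cdot)^{+}$.

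For continuity, I would note that $I$ excludes $c_{-}+c_{+}=0$, so $(\phi(x)-\phi(0))/(c_{-}+c_{+})$ is continuous in all four variables at every point of $I$; composing with the continuous map $y \mapsto y^{+}$ preserves continuity. Equivalently, the two branches of Eq.~\ref{eq:ourfee0} agree (both equal $0$) on the gluing set $\{\phi(x)=\phi(0)\}$, so no jump is introduced.

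Subadditivity is where the genuinely new inequality appears, and I expect it to be the most instructive step. The key observation is that all three fee terms share the same denominator: substituting $(c_{-}-x_1, c_{+}+x_1)$ into Eq.~\ref{eq:ourfee0} leaves $(c_{-}-x_1)+(c_{+}+x_1)=c_{-}+c_{+}$, and its distance term becomes $\phi(x_1+x_2)-\phi(x_1)$. Setting $a=\phi(0)$, $b=\phi(x_1)$, $c=\phi(x_1+x_2)$ and dividing out the common positive factor $F/(c_{-}+c_{+})$, the claim reduces to
\begin{align*}
(c-a)^{+} \le (b-a)^{+} + (c-b)^{+}.
\end{align*}
This follows from sublinearity of the positive part: $c-a = (c-b)+(b-a)$ and $(u+v)^{+} \le u^{+}+v^{+}$ for all reals $u,v$. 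Note that, unlike Theorem~\ref{thm:feebal} where the constant $+1$ offsets forced a strict inequality, here I obtain only a non-strict $\le$, which is precisely what the property requires.

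The main obstacle is balance-incentive. By construction $\fee^0_{bal}$ is \emph{non-decreasing} in $\phi(x)=|c_{-}-x-r|$, matching the intuition behind Definition~\ref{def:balInv}, but the clamping makes this monotonicity only weak: on the entire flat region $\{\phi(x)\le\phi(0)\}$ the fee is identically $0$. If $\fee^0_{bal}(c_{-},c_{+},r,x_1)\le\fee^0_{bal}(c_{-},c_{+},r,x_2)$ with both fees strictly positive, strict monotonicity of $(\cdot)^{+}$ on its increasing branch yields $\phi(x_1)\le\phi(x_2)$ as Eq.~\ref{eq:balInv} demands; and if the larger fee is positive while the smaller is $0$, then $\phi(x_1)\le\phi(0)<\phi(x_2)$ again works. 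The case I would scrutinise most carefully is when \emph{both} fees equal $0$: there Eq.~\ref{eq:balInv} still asks for $\phi(x_1)\le\phi(x_2)$, yet the fees carry no information to order the two distances, so the implication goes through only under a reading of balance-incentive that tolerates ties in the zero-fee regime (equivalently, that does not insist on ordering transactions which already fail to worsen the balance). I would therefore dispatch the two strictly-positive-fee regimes cleanly and flag this tie case as the delicate point to reconcile with the literal statement of Definition~\ref{def:balInv}.
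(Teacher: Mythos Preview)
Your continuity argument and your subadditivity argument are both correct. For subadditivity you take a genuinely different route: the paper performs a three-way case split on which of $\fee^0_{bal}(c_-,c_+,r,x_1)$ and $\fee^0_{bal}(c_--x_1,c_++x_1,r,x_2)$ vanish and handles each case by a direct inequality on $|c_--x_1-x_2-r|$, whereas you collapse everything into the single identity $(c-a)^+\le(b-a)^+ + (c-b)^+$ via sublinearity of the positive part. Your version is shorter and avoids the bookkeeping of keeping track of which branch of Eq.~\ref{eq:ourfee0} is active; the paper's version has the minor advantage of making explicit which sign assumptions drive each case.

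Your caution about balance-incentive is well placed and is not merely stylistic. The paper dispatches this property in one sentence (``increasing in $|c_--x-r|$''), but under the literal implication in Definition~\ref{def:balInv} that argument does not go through for $\fee^0_{bal}$: take $c_-=10$, $r=5$, $x_1=1$, $x_2=3$; then $\phi(x_1)=4$, $\phi(x_2)=2$, both are $\le\phi(0)=5$, so both fees are $0$ and the hypothesis $\fee^0_{bal}(\ldots,x_1)\le\fee^0_{bal}(\ldots,x_2)$ holds while the conclusion $\phi(x_1)\le\phi(x_2)$ fails. So the tie case you flag is a real counterexample to Eq.~\eqref{eq:balInv} as written, not just a delicate corner. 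The paper's intended meaning is clearly the weak monotonicity you state (fee non-decreasing in $\phi(x)$), which $\fee^0_{bal}$ does satisfy; your write-up should make that reading explicit rather than leave it as a flag.
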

\begin{proof}
By design, $\fee^0_{bal}$ is balance-incentive as it is increasing in $|c_{-}-x-r|$. Furthermore, it is continuous in all variables for inputs in $I$. 

For subadditivity, we only have to consider the case $\fee^0_{bal}(c_{-},c_{+}, r, x_1+x_2)>0$. 
If both $\fee^0_{bal}(c_{-},c_{+}, r, x_1)>0$ and $\fee^0_{bal}(c_{-}-x_1,c_{+}+x_1, r, x_2)>0$, analogously to the proof of Theorem~\ref{thm:feebal}, 
\begin{align*}
&fee^0_{bal}(c_{-},c_{+}, r, x_1) + \fee^0_{bal}(c_{-}-x_1,c_{+}+x_1, r, x_2) \\
&=\frac{|c_{-}-x_1-x_2-r|-|c_{-}-r|}{c_{-}+c_{+}}\cdot F
\end{align*}
 and hence $\fee^0_{bal}$ is subadditive. 
Otherwise, if $\fee^0_{bal}(c_{-},c_{+}, r, x_1)=0$, then $|c_{-}-x_1-r| \leq |c_{-}-r|$ and hence 
\begin{align*}
&\fee^0_{bal}(c_{-},c_{+}, r, x_1+x_2)\\
=&\frac{|c_{-}-x_1-x_2-r| -|c_{-}-r|}{c_{-}+c_{+}}F \\
\leq &\frac{|c_{-}-x_1-x_2-r| -|c_{-}-x_1-r|}{c_{-}+c_{+}}F \\
=&\fee^0_{bal}(c_{-}-x_1,c_{+}+x_1, r, x_2) +\fee^0_{bal}(c_{-},c_{+}, r, x_1)
\end{align*}
Thus, $\fee^0_{bal}$ is subadditive for the this case as well.
Similarly, if $\fee^0_{bal}(c_{-}-x_1,c_{+}+x_1, r, x_2)>0$, then $|c_{-}-x_1-x_2-r| \leq |c_{-}-x_1-r|$. Subadditivity follows as above from
\begin{align*}
|c_{-}-x_1-x_2-r| - |c_{-}-r| \leq |c_{-}-x_1-r| - |c_{-}-r|. 
\end{align*}
So, $\fee^0_{bal}$ is subadditive. 
\end{proof}

\section{Evaluation}

In this section, we evaluate The Merchant with respect to its impact on success ratio and communication overhead. For this purpose, we first choose a routing algorithm that is suitable for integration. Then, we present our simulation-based evaluation, which includes a comparison to state-of-the-art fee functions for payment channel networks. 

\subsection{Routing Algorithm}
Lightning's current routing algorithm only discovers one path and is  hence unsuitable for our design~\cite{lightning}. While there exist proposals for how to realize secure multi-hop payments in Lightning, there is no deployment. Previous research applied methods for discovering node-disjoint paths using source routing~\cite{di2018routing}.
However, there are few node-disjoint paths in Lightning~\cite{rohrer2019discharged}, indicating that such an algorithm frequently only considers one path option. Thus, requiring node-disjoint paths seems unrealistic given our knowledge of the network topology. 

Hence, this evaluation focuses on SpeedyMurmurs~\cite{roos2017settling}, an alternative routing algorithm that supports multiple paths by default. In contrast to Lightning, SpeedyMurmurs does not require the complete topology information and is hence scalable. If the Lightning network continues to grow in size, it will likely integrate such an algorithm that can route without a global view. 

In a nutshell, SpeedyMurmurs constructs $\numPathTotal$ spanning trees and for each tree assigns node coordinates based the nodes' positions in the tree. In each tree, nodes then discover a path in a distributed manner by forwarding to a neighbor that is closer to the destination in terms of the coordinates and has a channel of sufficient balance. 
One of the key drawbacks of SpeedyMurmurs is that failures in only one tree result in a failure of the complete payment. Only requiring $\numPathNeeded$ to succeed should improve the performance drastically in addition to the increase due to more suitable fee functions. 

Integrating The Merchant into SpeedyMurmurs requires adding redundancy, as detailed in Section~\ref{sec:routeIntegration}, and fee functions, as detailed in Section~\ref{sec:feefunc}. 

For comparison to the current Lightning implementation, our simulation also included an algorithm that only selects one cheapest path at the source. More concretely, as in Lightning, we assume that the total capacity of a channel is publicly known but not the balances. In our implementation of the cheapest-path algorithm, the source node executed Dijkstra locally using its snapshot of the topology, the known fees, and the known total capacities. The cost of each channel corresponded to its fee, which were fixed as in Lightning. The search algorithm started from the receiver such that the fees could be computed. During the execution, channels without the necessary total capacity were excluded. Note that this could still lead to failures due to insufficient balances. 

The above source routing algorithm incorporates the key idea of Lightning's routing: choose the cheapest path while excluding channels that clearly entail failures. However, the cost functions used in deployed Lightning clients slightly differ in terms of the cost function. Each of the three client implementation change the cost function in a different manner, adding preferences for short paths, low delays, or channels that previously led to successful payments~\cite{tochner2019hijacking}. For simplicity, we hence focused on the key idea of the algorithm and only discuss potential impacts of modifications while presenting the results. 

\subsection{Simulation Model}

Our simulation framework relies on the simulator written for SpeedyMurmurs\footnote{\url{https://crysp.uwaterloo.ca/software/speedymurmurs/}}, which we extended to include redundancy and fees.

As the cryptographic operations differ between implementations of payment channel networks, the simulator does not include the cryptographic overhead and focuses on the path discovery. As there are methods for serializing concurrent transactions~\cite{werman2018avoiding}, there is also no concurrency implemented.  Instead of concrete latencies, the hop count is used as a measure of transaction duration. In such a setting, the best measure of time is the number of transactions.

\begin{figure*}
     \centering
     \subfloat[x=0.05]{\includegraphics[width=0.49\textwidth]{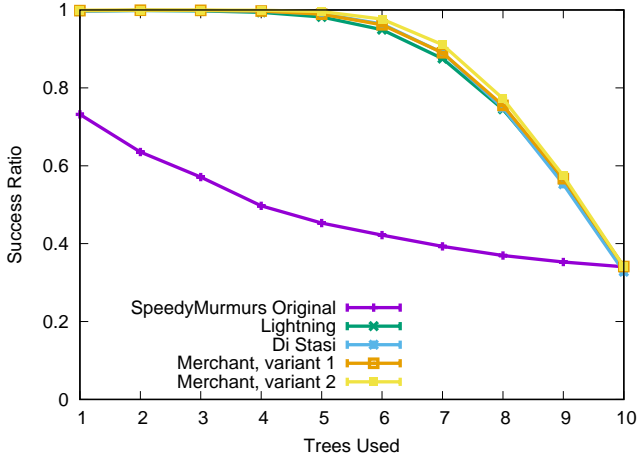}\label{fig:5p}}\hfill 
     \subfloat[x=0.5]{\includegraphics[width=0.49\textwidth]{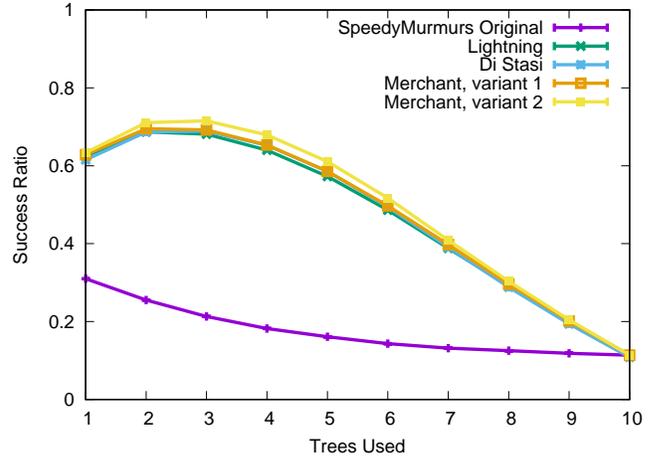}\label{fig:50p}}
     \caption{Success ratio averaged over 100,000 transactions for different fee functions (BA graph)}
     \label{fig:succAll}
\end{figure*}

\begin{figure*}
     \centering
     \subfloat[x=0.05]{\includegraphics[width=0.49\textwidth]{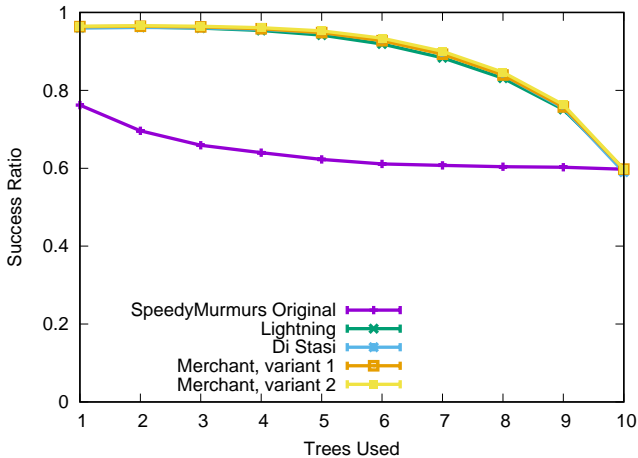}\label{fig:5pLN}}\hfill 
     \subfloat[x=0.5]{\includegraphics[width=0.49\textwidth]{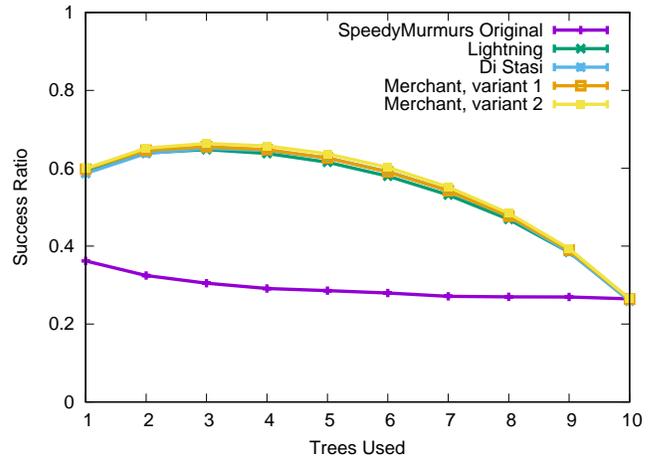}\label{fig:50pLN}}
     \caption{Success ratio averaged over 100,000 transactions for different fee functions (LN graph)}
     \label{fig:succAllLN}
\end{figure*}

\subsection{Simulation Setup}
For comparison, SpeedyMurmurs was simulated i) in its original form, ii) with redundancy and state-of-the-art fee functions, and iii) with The Merchant. For original SpeedyMurmurs, we varied $\numPathTotal$ between 1 and 10, when using redundancy we chose $\numPathTotal=10$ and varied $\numPathNeeded$ between 1 and 10. 

Transaction and fee values were in satoshi, i.e., one millions of Bitcoins. 
Lightning uses a constant base fee $b$ and a fee rate $r$ such that its total fees for a transaction amount $val$ are $b+r\cdot val$ for a channel. Our simulation used the default values of $b=1$ and $r=10^{-6}$ satoshi.  
In addition to our fee function, the study considered a fee function suggested in previous work. Concretely, di Stasi et al.\ designed a modification of Lightning fees that increases the fee rate for the partial amount of the payment that deteriorates the balance.
As suggested in the respective paper~\cite{di2018routing}, we chose the two rates to be $0.01$ and $0.03$ satoshi 
for non-deteriorating and deteriorating (partial) payments. We refer to the fee function as  di Stasi. 
For The Merchant, we set $F=1$, and the reference point was always $50\%$ of the total capacity. 

Our models for the topology and initial balances incorporate statistics about the Lightning network~\footnote{\url{https://1ml.com/statistics}}. 
The topologies were a real-world Lightning snapshot and a synthetic Barabasi-Albert graph.
The snapshot was from  March 1, 2020, snapshot 04\_00\footnote{\url{https://gitlab.tu-berlin.de/rohrer/discharged-pc-data}}. It contains $6329$ nodes with the average number of channels being $10.31$. 
As Lightning is a small-world network~\cite{rohrer2019discharged}, we also modeled the topology as a Barabasi-Albert graph of 5000 nodes with an average degree of roughly 10. 
The initial balances had an expected value of $init=2400000$ satoshi. While the statistics for the Lightning network indicate an exponential distribution, we also considered normally distributed initial balances to study the impact of the balance distribution. 
In the absence of a reliable data set to model transactions, source and destination pairs were chosen uniformly at random. The transaction value was either exponentially or normally distributed with the expected value being $x*init$ for $x \in \{0.01, 0.05, 0.25, 0.5, 1\}$. Each simulation run considered 100,000 transactions. 
Results were averaged over 20 runs.  

\subsection{Results}

This section first discussed the key performance evaluation criterion, the success ratio, followed by short discussions of the impact of the balance distribution, the overhead, and a comparison to cheapest-path source routing.

\textbf{Success ratio:} Figure~\ref{fig:succAll} displays the success ratio for the Barabasi-Albert graph. Initial balances and transaction values were exponentially distributed, the value $x$ determining the expected transaction amount was either $x=0.05$ (in Figure~\ref{fig:5p}) or $x=0.5$ (in Figure~\ref{fig:50p}), corresponding to low- and high-value transactions. 
Results are displayed with standard deviation but the differences between runs were minimal, so that the error bars are barely visible. 
In both cases, redundancy greatly improved the success ratio in comparison to the original SpeedyMurmurs due the possibility of overcoming failures. 
Balance-incentive fee functions further increased the success ratio. Di Stasi  fees and The Merchant variant 1 have a very similar effect on the success ratio. As a consequence, the respective lines are hard to distinguish. 
The second The Merchant variant that allowed for zero fees achieved the highest success ratio. Indeed, the increase is close to $8\%$ in comparison to the original Lightning fees.

The success ratio was highest if $\numPathNeeded$ was relatively low, i.e., 5 or below. For large payments, which were more common for $x=0.5$, it was often not feasible to forward a payment via only one path. As a consequence, the maximal success ratio was achieved for $2$ or $3$ trees.  
\begin{figure}
     \centering
     \includegraphics[width=0.8\linewidth]{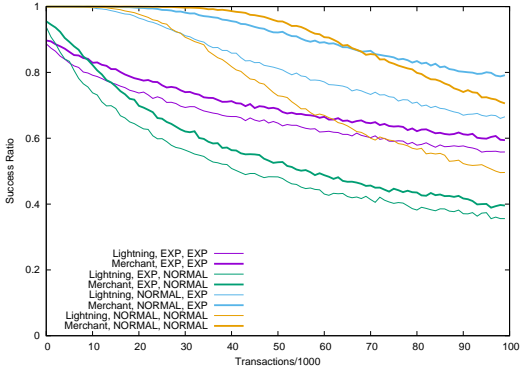}
     \caption{Success ratio over time for $x=0.25$, the first element of the labels indicates the fee function, the second indicates the type of initial balance distribution (EXP=exponential, NORMAL=normal), and the third the type of transaction distribution (Barabasi-Albert, 5000 nodes)}
     \label{fig:time}
\end{figure}
\begin{figure*}
     \centering
     \subfloat[x=0.05]{\includegraphics[width=0.49\textwidth]{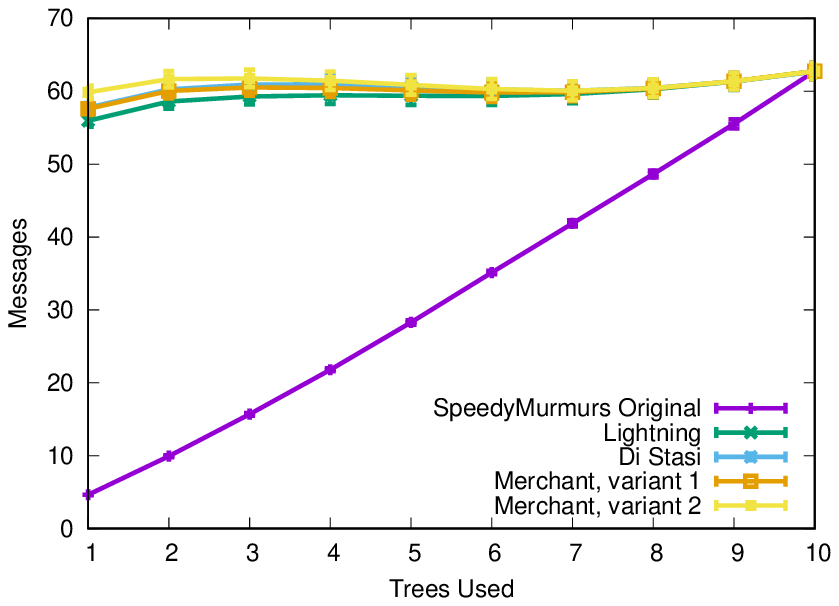}\label{fig:5p-hop}}\hfill 
     \subfloat[x=0.5]{\includegraphics[width=0.49\textwidth]{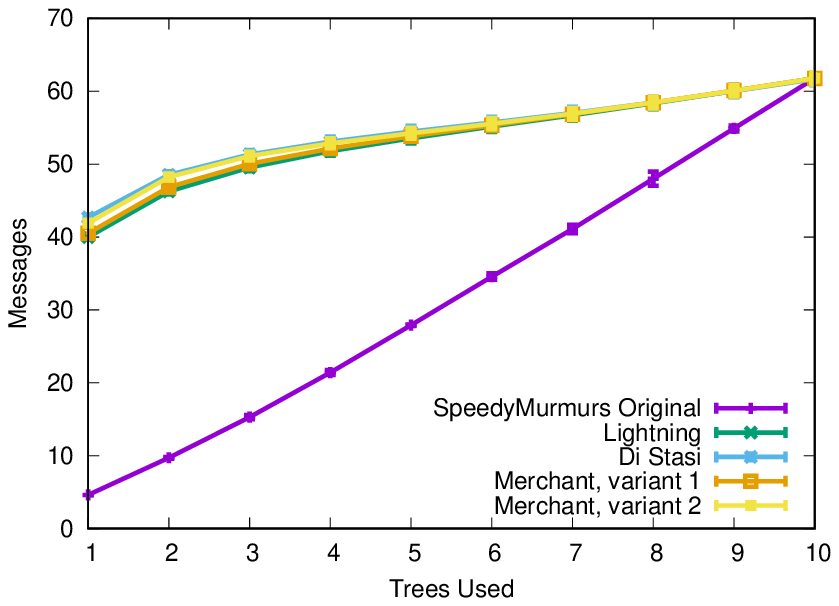}\label{fig:50p-hop}}
     \caption{Communication overhead in number of messages averaged over 100,000 transactions for different fee functions (Barabasi-Albert, 5000 nodes)}
     \label{fig:hopAll}
\end{figure*}

The results are similar for the Lightning snapshot as Figure~\ref{fig:succAllLN} indicates. The Merchant variant 2 consistently achieves the highest success ratio. The Merchant variant 1 and Di Stasi perform almost identical, so that the lines cannot be distinguished. The positive impact of The Merchant was less for Lightning than Barabasi-Albert. The reason for this lower increase in success lies in the low number of distinct path in Lightning. Lightning has a hub-and-spoke topology~\cite{rohrer2019discharged}, meaning that all paths between two nodes likely share at least one hub. 
Thus, it is impossible to avoid the use of certain channels and different fees for the channel have less impact.

In order to assess the level of depletion over time, Figure~\ref{fig:time} shows the success ratio for batches of 1000 transactions when applying Lightning or The Merchant variant 2 fees. A balance-incentive fee function increased the success ratio almost from the start. After 100,000 transactions, The Merchant's success ratio was always at least 10\% higher than for Lightning fees. However, all algorithms experienced a pronounced drop in success over time. The results did not indicate a monotonous decrease, however, the oscillations were not significant.

\textbf{Balance distribution:} Furthermore,  Figure~\ref{fig:time} displays differences between exponentially and normally distributed initial balances for the Barabasi-Albert graph. The positive effect of The Merchant is more pronounced for the latter.
 Indeed, after 100,000 transactions, the difference in success ratio between Lightning fees and The Merchant was up to $19\%$ for normally distributed balances and exponentially distributed transaction amounts. 
For exponentially distributed balances, few channels had a high enough capacity to route payments of a medium or high value. Thus, there were few options for paths. In the absence of choices, the same path was taken regardless of the fees. 
For normally distributed balances, channel capacities had less variance resulting in more options and hence The Merchant's fees could actually lead to different paths than Lightning fees.

\textbf{Overhead:} Redundancy clearly increased the communication overhead of successful payments in terms of number of forwarding operations, as shown in Figure~\ref{fig:hopAll} for the same parameters as in Figure~\ref{fig:succAll}. The differences between the fee functions were small but significant. Note that our metric only considered the routing process, which also included paths that were not chosen for payment. Hence, the differences were not due to the varying degree of preference for shorter paths.  Rather, as the depletion progressed differently, balance-incentive fee functions like The Merchant had more path options available and hence less paths that terminated after the first few hops. As a consequence, the overhead was slightly higher.  

\textbf{Comparison to source routing:} On the Barabasi-Albert graph, using source routing along one cheapest path incurred success ratios of approximately 89.5\% and 47.7\% for $x=0.05$ and $x=0.5$, respectively. For the Lightning topology, the corresponding success ratio was 88.4\% and 46.0\%. 
As displayed in Figures~\ref{fig:succAll} and~\ref{fig:succAllLN}, source routing had a considerable lower success ratio than multi-path routing that allowed the selection of a subset of paths.   
As stated above, our implementation of cheapest-path source routing is not an exact implementation of Lightning, which implements a number of variants. However, as our simulation relied on default fees for all nodes, the cheapest path was also a shortest path. Intuitively, the probability of failure should be least when using shortest paths as the number of channels that can have insufficient balance is lowest. Thus, it seems unlikely that any Lightning implementation can increase the success ratio for one routing attempt. Clearly, multiple routing attempts will increase the success ratio. Yet, further routing attempts in payment channel networks are only possible after the smart contracts for conditional payments expire, which takes hours to days~\footnote{\url{https://github.com/lightningnetwork/lightning-rfc/blob/master/07-routing-gossip.md#the-channel_update-message}}. As a consequence, we did not consider multiple attempts a viable option. 

In summary, The Merchant indeed reduces the effect of depletion at a barely noticeably increased communication overhead in comparison to other fee functions.
However, on its own, the algorithm does not solve the issue of depletion. Note that depletion can often not be fully solved by fee incentives alone: If funds are only utilized in one direction, e.g., with a customer paying for a service, having incentives for the other direction does not have any impact. However, we have seen that balance-incentive fee functions can mitigate depletion.  

\section{Deployment Options}
\label{sec:deployment}

Our protocol differs from the assumptions in current deployment in key aspects. Hence, it seems non-trivial to integrate it into existing systems such as Lightning.
In the following, we first describe the steps of Lightning routing and its key differences to The Merchant. Afterwards, we propose and compare two deployment options for our protocol.

\subsection*{Lightning routing}
In Lightning~\cite{lightning}, the source first determines one routing path. The available public information includes the topology, the public keys of all nodes, the base fee and fee rate, as well as the total capacity of the channel, i.e., the sum of the collateral locked by the two parties in a channel.
The exact manner of selecting the path differs between the three available Lightning clients: Lightning Labs' lnd\footnote{\url{https://lightning.engineering/}}, ACINQ's eclair\footnote{\url{https://acinq.co/}}, and Blockstream's c-lightning\footnote{\url{https://blockstream.com/lightning/}}. Yet, all three aim to select a path that is cheap, short, and likely to lead to a successful payment. 
After having decided on a path, the source encrypts the path information such that each node removes a layer of encryption that reveals the next hop on the path but not any further hops~\cite{danezis2009sphinx}. Not explicitly revealing more path information than necessary should protect the anonymity of sender and receiver but multiple studies have raised questions regarding the effectiveness of this approach in the presence of metadata like payment value and timings~\cite{kappos2020empirical,nisslmueller2020toward}.

Then, the sender starts building the path. Each pair of nodes along the path commits to the payment. If one node does not commit, the payment fails. Reasons not to commit could be a lack of available funds as the sender only knows the maximal amount of funds available and the actual amount can be much lower. However, considering the payment not beneficial enough or being offline are also reasons. 
If all nodes commit to the payment, the payment is executed.  
Note that the current single-path algorithm can be modified to choose multiple paths~\cite{atomicmultipath}. 

Our approach seems fundamentally different as the routing and fee decisions are made by nodes locally depending on information about the current state of the channel, which is unavailable to the source. Hence, we cannot simply replace the first step of the protocol where the source decides on a path locally with our path finding algorithm. 
Without knowing the path or paths, the source cannot encrypt and determine the total value, i.e., the transaction value including fees, for the path. Yet, the commitments to pay a certain value need to include the fee. 
Last, our protocol does not actually utilize all discovered paths but only the cheapest ones. 

\subsection*{Pre-Routing}
The first option to integrate our protocol is to replace source routing with distributed path discovery as follows: The receiver initiates the routing for the sender's coordinates.
In this manner, the sender obtains all the paths and the total fee per path if the routing reaches the sender. The intermediaries keep note of the fees they indicate for this route construction, e.g., by associating it with a query ID.
Once the sender has the complete path and fee information included in a successful routing query, it can complete the remaining payment steps as in the current Lightning version, albeit with multiple paths. Initially querying for more paths than the selected ones has no impact as nodes do not make any commitments during this initial \emph{pre-routing} phase.

The most obvious difference to the current protocol is the higher communication overhead due to the first step involving a distributed path selection. In Lightning, the routing includes two traversals of the path: one to make the commitments, one to execute the payments. Pre-routing adds a third traversal for $\numPathTotal$ paths while commitment and payment phase only use $\numPathNeeded$ paths. 

A second point for discussion is that intermediaries might be unwilling to forward the payment for the previously stated fee if their local situation changes between pre-routing and the actual commitments. For instance, the balances can change such that they now should charge a higher fee or the channel does no longer have the necessary balance. However, as stated above, failures due to intermediaries not forwarding the payment are common in the current network as well. Indeed, in our version, the source at least knows that the payment was possible very recently. The exact time between the pre-routing and the actual commitment depends on path length and latencies but it is unlikely to exceed a few seconds. In contrast, the current protocol does not have recent information about the channel state and hence a higher probability to choose a channel that lacks the necessary balance. 

The last difference relates to privacy, specifically sender anonymity. The pre-routing requires sender coordinates. There exist algorithms for generating anonymized versions of these coordinates but they still reveal some information about the location of the sender in the network~\cite{roos2016anonymous}. Given the issues with anonymity in the current implementations of Lightning, it remains unclear if the information revealed during pre-routing decreases the privacy further. In general, improving the anonymity of payment channel network routing is an important area of future work for both Lightning and The Merchant. Note that the path information can be relayed to the source in an encrypted manner and hence does not reveal the path to intermediaries. 

In summary, we have a feasible option for integrating our protocol into Lightning while maintaining large parts of their current payment process. The key disadvantages of this deployment option is an increased communication overhead. 

\subsection*{Local Routing Decisions}
Our second idea is to combine the path selection with the commitments, i.e., make commitments to pay while also deciding on the path. The paths are then determined locally by forwarding nodes. In this manner, we can get rid of the separate pre-routing step and hence reduce latency and communication overhead. Furthermore, nodes commit directly rather than only indicating a potential cooperation during pre-routing, hence reducing opportunities for failures. Privacy-wise, the routing in this second option requires the receiver coordinates for finding the path, hence the problem is similar to the solution with pre-routing. 

A recent work presents a method for securely realizing local routing protocols like the one described above~\cite{eckey2020splitting}. In particular, it allows nodes to further split the payment, which we can utilize to realize multiple paths. 
The authors show that their protocol works in Lightning. 

The presented protocol has two aspects that make it not immediately applicable to our scheme. First, it assumes atomic payments, i.e., all of the partial payments are executed whereas our protocol only executes a subset of payments along the cheapest paths. Second, the authors leave it open how to integrate fees into their algorithm. 

We think that these drawbacks can be overcome by combining local routing with Boomerang~\cite{bagaria2020boomerang}. Boomerang allows the source to route more coins than the payment value and then reclaim the left-over coins. Thus, the source can add a maximal value it is willing to pay as a fee to the routed payment value. If the payment succeeds, the source can reclaim both funds routed along paths that have not been selected as cheapest paths and any remaining fees. 

This deployment option has a lower overhead and less latency as it avoids the extra pre-routing step but is considerably harder to integrate into Lightning as it drastically changes the routing process. 

This section shows that we can integrate our protocol into Lightning while maintaining the security guarantees in at least two ways.

\section{Related Work}

A large fraction of the work on multi-hop payment channels focuses on the construction of cryptographic protocols for enabling no counterparty risk, atomicity, and privacy properties such as anonymity~\cite{malavoltamulti,htlcs,sprites,egger2018amcu}.

Furthermore, there are protocols that allow for redundancy~\cite{bagaria2020boomerang} and deadlock prevention~\cite{werman2018avoiding}, which are applicable for a wide range of routing algorithms. Our design incorporates redundancy but we combine it with a fee mechanism for mitigating depletion. 

With regard to routing payments, early designs~\cite{lightning,Flare2016,roos2017settling} do not consider the aspect of depletion. They mostly evaluate their algorithm in a static setting, i.e., without changing balances.
The static setting allows for a comparison to an optimal algorithm in terms of success ratio, as the maximal payment value then corresponds to the maximum flow~\cite{roos2017settling}.  

The two approaches closest to our work are Spider~\cite{sivaraman2018routing} and a fee function designed by di Stasi et al.~\cite{di2018routing}. 

Spider proposes a packet-switched network and divides each payment into units. The source routes each payment individually using a flow model that takes the current balances into account. 
However, Spider assumes that all current balances are known to the sender, which is clearly unrealistic in a large network with frequent changes.  A recent  extension  to the original design hence instead routes only along pre-selected paths and only probes the balances of these paths. However, the probing process leads to delays in adjusting to balance changes. Furthermore, the path discovery still requires global knowledge of the network topology and the implementation only scales to about 100 nodes~\cite{sivaraman2018high}. 

In contrast, di Stasi et al. propose to change the fee function in Lightning. In Lightning, nodes charge a constant base fee $b$ and a fee linear in the transaction amount~\cite{lightning}. 
The proposed fee strategy modifies the second fee to increase the slope of the linear fee function if the payment contributes to the balance in a negative manner. More concretely, they split a transaction amount val into $val_1$ and $val_2$. If the channel balance in the direction of routing is below half the total capacity, $val_1=0$.  If the balance minus the transaction value is at least half the total capacity, $val_2=0$. Otherwise, $val_1$ corresponds to the difference between the balance and and half the total capacity while $val_2$ is the remaining amount. The total fee is then $b + val_1 \cdot r_1 + val_2 \cdot r_2$ for rates $r_2 > r_1$. 
Our evaluation indicates that The Merchant leads to improved results in terms of maintaining a high success ratio.

\section{Conclusion}

This work illustrates how fees can incentivize maintaining balanced channels. However, they can not completely avoid depletion. Thus, it is important to combine them with other approaches such as on-chain rebalancing and loops. 
Future work should focus on coupling different strategies for rebalancing in payment channel networks.

\bibliographystyle{plain}
\bibliography{rebalance}

\end{document}